\newtheorem{definition}{Definition}[section]
\newtheorem{theorem}{Theorem}[section]
\newtheorem{example}{Example}[section]
\newcommand{\cost}{\mathit{cost}}
\newcommand{\deposit}{\mathit{deposit}}
\newcommand{\compensation}{\mathit{comp}}
\newcommand{\FairSwapDeploymentCostDate}{2021-12-17}
\newcommand{\FairSwapTestBlock}{13,823,842} 
\newcommand{\FairSwapDeploymentBaseGasPrice}{60}
\newcommand{\FairSwapDeploymentExchangeRate}{3880}
\newcommand{\FairSwapDeploymentCost}{349.20}
\begin{document}

\title{Formalizing Cost Fairness for Two-Party Exchange Protocols using Game Theory\\and Applications to Blockchain (Extended Version) \\ \thanks{Kenneth Skiba was supported by the Deutsche Forschungsgemeinschaft under grant KE 1413/11-1 and Jan Jürjens by the EC (Horizon 2020) within the projects "Digital Reality in Zero Defect Manufacturing (Qu4lity)" and "Trusted Secure Data Sharing Space (TRUSTS)".}}

\author{
	\IEEEauthorblockN{
		Matthias Lohr\IEEEauthorrefmark{1},
		Kenneth Skiba\IEEEauthorrefmark{2},
		Marco Konersmann\IEEEauthorrefmark{1},
		Jan Jürjens\IEEEauthorrefmark{1}\IEEEauthorrefmark{3},
		Steffen Staab\IEEEauthorrefmark{4}\IEEEauthorrefmark{5}
	}
	\IEEEauthorblockA{\IEEEauthorrefmark{1}Institute for Software Technology, University of Koblenz-Landau, Koblenz, Germany}
	\IEEEauthorblockA{\IEEEauthorrefmark{2}Artificial Intelligence Group, Fernuniversität in Hagen, Hagen, Germany}
	\IEEEauthorblockA{\IEEEauthorrefmark{3}Fraunhofer ISST, Dortmund, Germany}
	\IEEEauthorblockA{\IEEEauthorrefmark{4}Institute for Parallel and Distributed Systems (IPVS), University of Stuttgart, Stuttgart, Germany}
	\IEEEauthorblockA{\IEEEauthorrefmark{5}University of Southampton, Southampton, United Kingdom}
}

\maketitle

\begin{abstract}
Existing fair exchange protocols usually neglect consideration of cost when assessing their fairness.
However, in an environment with non-negligible transaction cost, e.g., public blockchains, high or unexpected transaction cost might be an obstacle for wide-spread adoption of fair exchange protocols in business applications.
For example, as of \FairSwapDeploymentCostDate, the initialization of the FairSwap protocol on the Ethereum blockchain requires the selling party to pay a fee of approx.\ \FairSwapDeploymentCost~USD per exchange.
We address this issue by defining cost fairness, which can be used to assess two-party exchange protocols including implied transaction cost.
We show that in an environment with non-negligible transaction cost where one party has to initialize the exchange protocol and the other party can leave the exchange at any time cost fairness cannot be achieved.
\end{abstract}


\section{Introduction}
\label{sec:introduction}

In commerce, two or more parties want to exchange goods.
According to Asokan \cite{DBLP:books/daglib/0094910}, an exchange becomes a \emph{fair exchange} iff it is guaranteed that either all involved parties get exactly the good they requested, or no good has been transferred at the end of the exchange \cite{DBLP:books/daglib/0094910,DBLP:journals/cj/PagniaVG03}.
It has been shown that a trusted third party is required to achieve fairness for a two-party exchange \cite{even1980relations,pagnia1999impossibility}.
In non-digital exchanges (e.g., buying/selling a house), notaries or banks take on the role of a trusted third party.
In electronic commerce, several approaches have been developed that ensure a fair exchange between two parties either utilizing dedicated organizations as trusted third parties or utilizing blockchains (or more general, distributed ledgers) as distributed trusted third party \cite{DBLP:journals/iacr/Delgado-SeguraP17,DBLP:conf/ccs/DziembowskiEF18,DBLP:conf/ccs/EckeyFS20,wagner2019dispute,hall2019fastswap}.

When a trusted third party is involved in an exchange, it can raise non-negligible transaction cost (e.g., notary fees or fees for a bank guarantee).
Such transaction cost must be considered separately from possible payments as part of the exchange, as they are intended to pay the trusted third party for their services rather then being part of the goods (including money) to be exchanged between the participants\footnote{In this work, we use the terms \emph{transaction} and \emph{transaction cost} generally for interactions with the trusted third party and resulting cost.}.

When an exchange protocol is used in which a public blockchain (e.g., Ethereum \cite{wood2014ethereum}) acts as a trusted third party, all interactions with the trusted third party are performed using \emph{blockchain transactions}, which require the acting party to pay transaction cost in form of \emph{blockchain transaction fees}\footnote{Every time we need to refer to concrete type of transaction or transaction cost, e.g., in context of blockchains, we prefix it with the according concretization, such as \emph{blockchain transaction} and \emph{blockchain transaction fee}}.
For example, the initialization of the FairSwap protocol (deployment of a single-use smart contract for the exchange), which provides functionality to fairly sell data for money on the Ethereum blockchain, requires the selling party to pay for blockchain transaction fees of approx.\ 1,050,000 Gas\footnote{As stated by Dziembowski et al.\ \cite{DBLP:conf/ccs/DziembowskiEF18}. During our tests with minor bug fixes we observed cost of approx.\ 1,500,000 Gas. Our version of the smart contract with bug fixes is available online at \url{https://gitlab.com/MatthiasLohr/bdtsim}.}, which, as of \FairSwapDeploymentCostDate, is worth approx.\ \FairSwapDeploymentCost~USD\footnote{As of \FairSwapDeploymentCostDate, Ethereum block \FairSwapTestBlock~ was created with a base Gas price of approx.\ \FairSwapDeploymentBaseGasPrice~GWei/Gas and an exchange rate of approx.\ \FairSwapDeploymentExchangeRate~USD/Eth ($1$ Eth = $10^9$ GWei), which results in blockchain transaction fees of approx.\ \FairSwapDeploymentCost~USD for deployment the smart contract, assuming zero tip \cite{EIP1559}.}.
There exist alternative approaches, such as optimistic protocol design \cite{DBLP:conf/ctrsa/KupcuL10} or the usage of state channels  \cite{DBLP:conf/ccs/DziembowskiFH18} that can generally be used to reduce blockchain transaction fees.
Nevertheless, even then transaction cost is greater than zero and often non-negligible.

For private blockchains, the existence of transaction cost depends on the selected concepts and implementations decided to be applied.
E.g., the Hyperledger Fabric \cite{DBLP:conf/eurosys/AndroulakiBBCCC18} blockchain framework does per default not include any means or features of financial values or currencies.
However, also operation of a private blockchain costs money (e.g., for buying the required servers), which can be apportioned to each blockchain transaction sent to the private blockchain instance, or asking for a fixed monthly fee but not charging per blockchain transaction.

So far, all blockchain-based fair exchange protocols known to us only consider the whereabouts of the goods to be exchanged for fairness assessment, while they ignore transaction cost accrued by using the blockchain as trusted third party.
This opens the possibility for a \emph{grieving attack} \cite{DBLP:conf/ccs/EckeyFS20} as it is shown in Figure \ref{fig:grieving attack sequence diagram}, where an unfaithful party $B$ causes a faithful party $A$ to initiate an exchange with a transaction that accrues transaction cost and then leaves without finishing the exchange.
Doing so, an attacker can harm the attacked party (e.g., business opponent) with only low or even zero cost for the attacker while the attacked party has to bear possibly non-negligible transaction cost for the initialization.
Due to blockchain anonymity\footnote{It has been shown by, e.g., Biryukov and Tikhomirov that there exist several but unreliable methods for identity deanonymization on blockchains such as Bitcoin \cite{DBLP:conf/eurosp/BiryukovT19}. We assume that deanonymization might not be sufficiently reliable to prevent grieving attacks.}, the faithful party cannot reliably distinguish between a repeated request from the same unfaithful party or a new party.
Even given an exchange that is proven to be fair following the definition by Asokan \cite[p.~9f]{DBLP:books/daglib/0094910}, a faithful party may either accept incoming requests and risk bearing the costs of a grieving attack, or not accept incoming requests at all and thus not complete their planned exchange of goods.

\begin{figure}
	\centering
	\subfloat{%
		\includegraphics[width=.65\linewidth]{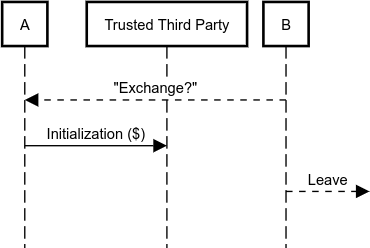}
	}\\
	\subfloat{%
		\includegraphics[width=.25\linewidth]{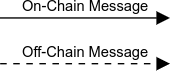}
	}
	\caption{Diagram of exemplary action sequence for a grieving attack, conducted by $B$. \emph{Initialization} is an action where $A$ pays fees to the Trusted Third Party in the belief that $B$ will continue the targeted exchange.}
	\label{fig:grieving attack sequence diagram}
\end{figure}

This raises the question of what an exchange protocol has to achieve in order to be fair \emph{and} resilient against grieving attacks.
We will introduce a formal definition of \emph{cost fairness} to address the following research questions:

\begin{enumerate}
	\renewcommand{\labelenumi}{RQ \arabic{enumi}}
	\renewcommand{\theenumi}{\arabic{enumi}}
	\renewcommand{\labelenumii}{\arabic{enumi}.\arabic{enumii}}
	\renewcommand{\theenumii}{.\arabic{enumii}}

	\item \label{rq:modeling} How can two-party exchange protocols be modeled so that transaction cost is taken into account?
	\item \label{rq:cost fairness assessment} How can the fairness of two-party exchange protocols be assessed regarding transaction cost?
	\item \label{rq:cost fairness/public blockchains} How to achieve cost fairness for public blockchain-based two party exchange protocols (e.g., FairSwap)?
	\item \label{rq:cost fairness/private blockchains} How to achieve cost fairness for private blockchain-based two party exchange protocols?
\end{enumerate}

In order to introduce the topic and provide the foundations our work bases on, we describe related work in Section \ref{sec:related work}.
Our first contribution, a model for two-party fair exchange protocols, answering RQ \ref{rq:modeling}, is presented in Section \ref{sec:modeling}.
To answer RQ \ref{rq:cost fairness assessment}, as our second contribution, we provide a definition for \emph{partial cost fairness} and \emph{full cost fairness} in Section \ref{sec:cost fairness}.
Our third contribution consists of two theorems, presented in Section \ref{sec:cost fairness achievability}, addressing the achievability of partial cost fairness and full cost fairness, especially in the context of blockchains.
We discuss our contributions and use these theorems to answer RQ \ref{rq:cost fairness/public blockchains} and RQ \ref{rq:cost fairness/private blockchains} in Section \ref{sec:discussion}.
We summarize our work and conclude in Section \ref{sec:conclusion}.

This paper is the extended version of the short paper published by Lohr et al.\ \cite{lohr2022costfairness}.

\section{Related Work}
\label{sec:related work}


Cost fairness has been informally defined by Lohr et al.\ \cite{lohr2020cost}.
Our work provides a formal underpinning for cost fairness that allows for modeling exchange protocols and for assessing them regarding cost fairness.
To this end, we use game theory as a formal framework and apply our model to blockchain-based exchange protocols.

\subsection{Fair Exchange}
\label{sec:related work/fair exchange}

The term \emph{fair exchange} describes the challenge of two or more parties that want to exchange their own goods with the guarantee that, despite absence of mutual trust, no party can gain advantage over the other parties \cite{DBLP:journals/cj/PagniaVG03}.
In this context, several definitions of fairness
have been presented as well as different approaches for designing fair exchange protocols, which claim to ensure a fair exchange (fairness as defined by Asokan \cite{DBLP:books/daglib/0094910}) as long as at least one party follows the fair exchange protocol
\cite{DBLP:conf/crypto/Cleve89,DBLP:conf/podc/Tygar96,DBLP:conf/fc/PagniaJ97,DBLP:conf/sp/BaoDM98,DBLP:conf/ccs/FranklinR97,DBLP:conf/ccs/AsokanSW97,DBLP:books/daglib/0094910}.
It has been shown that it is impossible to achieve fair exchange without involving a trusted third party \cite{even1980relations,pagnia1999impossibility}.
None of the approaches referenced above considers possible transaction cost of involving a trusted third party in an exchange.

\subsection{Game Theory}
\label{sec:related work/game theory}

In general, game theory deals with making strategic decisions when two or more parties interact with each other.
In game theory, the parties are referred to as players, which can choose between and follow different strategies to conduct and finish the interaction in the best way for the individual party by maximizing their payoffs \cite{morris2012introduction}.

Game theory already has been applied to the field of fair exchange
\cite{DBLP:conf/welcom/ButtyanH01,buttyan2000toward,DBLP:journals/cj/PagniaVG03}:
A fair exchange can be interpreted as multi-party game, where the fair exchange protocol can be represented by a game tree and the parties (players) involved in the exchange can choose between different strategies (e.g., ``behave faithfully'' or ``cheat'').
Buttyán and Hubaux introduced game theory as an approach for a formal framework, which can be used to assess and compare different types of fairness \cite{buttyan2000toward}.
While their model can be used to assess fairness of exchange protocols, it lacks the ability to assess other aspects of an exchange protocol such as the cost of involving a trusted third party.

For our work, we adopt and modify the general idea of Buttyán and Hubaux of modeling an exchange protocol using game theory to consider the values of the items to be exchanged as well as the transaction cost, which may arise during an exchange, furthermore additional expenses or revenues such as security deposits, paying or receiving a compensation.

\subsection{Blockchain}
\label{sec:related work/blockchain}

A blockchain is an append-only data structure reflecting a state (e.g., bank account balances, variable values), where each state update is collected into a so-called block, which gets appended to the existing data structure.
All modifications to the data can be verified against a set of rules for allowed modifications
and no single entity can prevent or enforce something related to the data without the support of the majority of blockchain participants \cite{nakamoto2008bitcoin}.
Further research and development has extended the concept to support Turing-complete programs for formalizing modification rules,
usually referred to as \emph{smart contracts}, e.g., in context of the Ethereum blockchain \cite{wood2014ethereum}.
Ethereum smart contracts are computer programs, whose source code is added as bytecode to the blockchain data.
This way, everybody who downloads the Ethereum data can execute the program and verify the results submitted to the network by other participants\footnote{Despite theoretically possible, not every node connected to the Ethereum network does this kind of verification. It is up to the node's administrator to decide if he is willing to invest the computational power and therefore has to pay for the required energy to support the blockchain by enabling the verification mechanisms. Alternatively, a node will accept all blocks of the longest chain of blocks.}.

Several approaches implement a trusted third party for fair exchange using Ethereum smart contracts \cite{DBLP:conf/ccs/DziembowskiEF18,hall2019fastswap,DBLP:conf/ccs/EckeyFS20}.
This is usually done by providing a proof of successful transfer or a proof of misbehavior to the smart contract implementing the trusted third party, who will either forward or pay back the payment if the proof can be verified.
Typically, blockchain-based fair exchange protocols are designed to conduct an exchange of data for money, usually in form of a blockchain-specific financial equivalent, which is often referred to as crypto-currency.
Alternatively, also non-fungible tokens could be exchanged, such as digital ownership representations of physical objects (e.g., house, car).

\section{Modeling Exchange Protocols using Game Theory}
\label{sec:modeling}

In this section, we present our model of an exchange protocol using game theory, building upon the work of Buttyán and Hubaux \cite{buttyan2000toward}.

\subsection{Extensive Game}
\label{sec:modeling/common}

We will build on the notion of an \emph{extensive game}, which can be formalized using as follows:

A \emph{game tree} \cite{morris2012introduction,DBLP:books/daglib/0023252} (see Figure \ref{fig:game tree example} for an example) is a tree that depicts all possible ways to play a game.

\begin{definition}[Game Tree \cite{morris2012introduction}]
	\label{def:game tree}
	A \emph{game tree} $T = (V, E, \mathcal{P}, o, \vv{p})$ is a directed tree with a set of vertices $V$ with root $v_0 \in V$, a set of edges $E \subseteq V \times V$ called moves, a set of $n$ players $\mathcal{P}$, a labeling function $o: V \rightarrow \mathcal{P}$, which labels each non-terminal vertex $v \in V$ with a player $P \in \mathcal{P}$ to \emph{own} $v$ and a labeling function $\vv{p}(v) = (p_{P_1}, ..., p_{P_n})$, which labels each terminal vertex $v \in V$ with an n-tuple of numbers called \emph{payoff}, which defines the individual payoff for each player $P_i$.
\end{definition}

Each vertex $v$ represents a possible state of the game to which $T$ belongs.
Being in a state that is represented by $v \in V$, player $P = o(v)$, $P \in \mathcal{P}$ is responsible to choose the next move, represented by $e = (v, v')$, $e \in E, v' \in V$, leading to a new state $v'$.

\begin{figure}
	\centering
	\includegraphics[width=.8\linewidth]{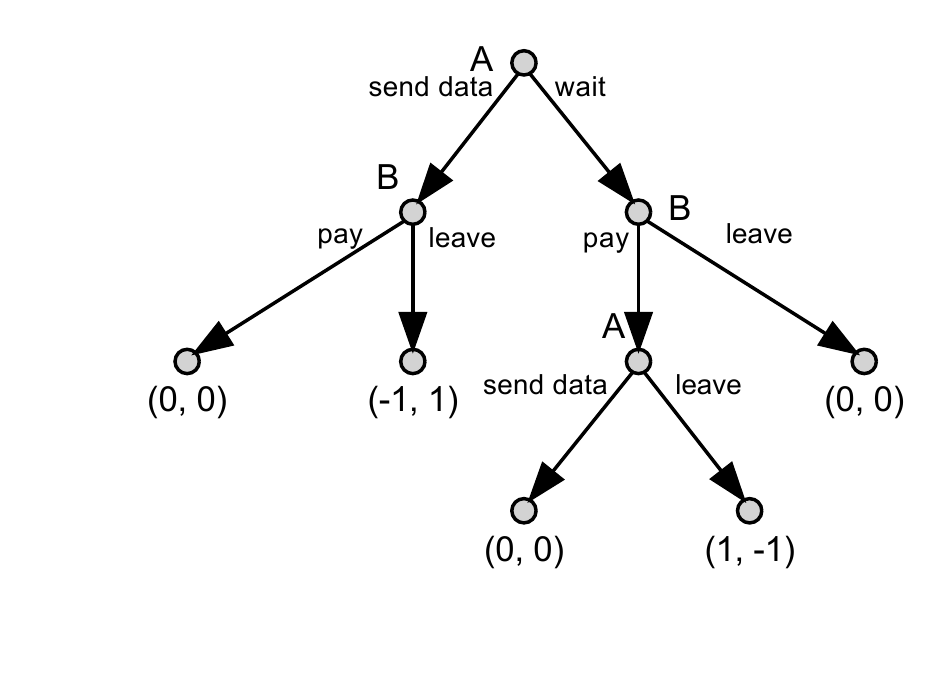}
	\caption{Example of a game tree with players $A$ and $B$ exchanging data for money with different orders of payment and data transfer.}
	\label{fig:game tree example}
\end{figure}

The behavior of players resulting in the selection of the next move in an extensive game is described by a \emph{strategy}.
For simplicity reasons, we only provide a basic definition of a strategy, which covers the aspects required for our work.
For a detailed and more formal definition of strategy we refer to Morris \cite{morris2012introduction}.

\begin{definition}[Strategy]
	A \emph{strategy} $S$ for player $P$ is represented by a partial function called \emph{choice function} $c_P: V \rightarrow V$, which for each $v \in V: o(v) = P$ returns a child $v'$ of $v$ with $(v, v') \in E$ being the next move chosen by $P$ following strategy $S$.
\end{definition}

The set of all available strategies to a player is called \emph{strategy set}:

\begin{definition}[Strategy Set \cite{morris2012introduction}]
	For player $P$ a \emph{strategy set} $\Sigma = \{S_1, ..., S_m\}$ is the set of all possible strategies of $P$.
\end{definition}

Using the previously defined terms, we can now define an extensive game:

\begin{definition}[Extensive Game \cite{morris2012introduction}]
	\label{def:extensive game}
	An \emph{extensive game} is defined as $\Gamma = (T, \mathcal{P}, \{\Sigma_{P_1}, ..., \Sigma_{P_n}\})$ with game tree $T$, set of players $\mathcal{P} = \{P_1, ..., P_n\}$ and their strategy sets $\Sigma_{P_1}, ..., \Sigma_{P_n}$.
\end{definition}

\subsection{Moves of an Extensive Game}
\label{sec:modeling/moves}

Using the terms defined in Section \ref{sec:modeling/common}, we introduce our model of an exchange protocol based on game theory.
For simplicity reasons, we only consider two-party exchange protocols and postpone the expansion to $n$-party exchange protocols to future work.
Similar to Buttyán and Hubaux \cite{buttyan2000toward}, we do not consider the trusted third party to be in the set of players, since we assume that it always behaves deterministically according to the protocol and will never act on its own, only at the instigation of a player.

We assume a two-party exchange with parties $\mathcal{P} = \{A, B\}$ who are interested to exchange their items $\iota_A$ and $\iota_B$.
We assume that $A$ and $B$ agreed on using the exchange protocol $\mathcal{X}$ (we will provide the formal definition of an exchange protocol in Definition \ref{def:exchange protocol}), but neither $A$ nor $B$ can technically be coerced to follow $\mathcal{X}$ during the exchange.
In order to conduct the exchange, $A$ and $B$ can choose their strategies $S_A$ and $S_B$ from their strategy sets $\Sigma_A$ and $\Sigma_B$.
We denote the set of conducted moves of $A$ with $E_A$ and the set of conducted moves of $B$ with $E_B$.

Each move can impact the state of the exchange, e.g., a payment can be conducted or the item (or parts of it, if the item is divisible, e.g., in context of gradual release \cite{DBLP:conf/crypto/Cleve89}) can be handed over between the parties.
We reflect these state changes by a tuple of attributes, which represent the move's effects on the ongoing exchange:

\begin{definition}[Move Attributes]
	\label{def:move attributes}
	Let $e \in E$ be an edge in a game tree $T$ of an extensive game $\Gamma$.
	Let $\mathcal{P} = \{A, B\}$ be the set of players in $\Gamma$.
	Let, w.l.o.g., $A$ be the player conducting $e$.
	We define a tuple $a(e) = (\vv{\rho_e}, \cost_e, \deposit_e, \vv{\compensation_e})$ to be the \emph{move attributes} of $e$,
	where $\vv{\rho_e} = (\rho_e^{A}, \rho_e^{B})$ is a vector of shares of the item transferred to $A$ and $B$ during $e$ with $0 \leq \rho_e^{P} \leq 1$, $P \in \mathcal{P}$,
	$\cost_e \geq 0$ is the transaction cost that has to be paid by $A$ to the trusted third party for conducting $e$,
	$\deposit_e \in \mathbb{R}$ are the funds deposited or retracted by $A$ conducting $e$
	and $\vv{\compensation_e} = (\compensation_e^{A}, \compensation_e^{B})$ with $\compensation_e^{P}$, $P \in \mathcal{P}$ is a vector of the compensations paid out to player $P$ in this move $e$.
\end{definition}

The \emph{item share} $\rho_e^{A}$ refers to the portion of the item $\iota_B$, which is released to $A$ in move $e$.
Indivisible items such as a valuable painting can only be transferred in one piece, in which case $\rho_e^{A} \in \{0, 1\}$.
Divisible items such as money or data can also be transferred in steps, in which case $0 \leq \rho_e^{A} \leq 1$.
Note that $A$ may do a move $e$ that releases an item share $\rho_e^{B}$ to $B$.
The same move $e$ may also trigger that another item share $\rho_e^{A}$ is released to $A$ himself.

The \emph{transaction cost}, denoted with $\cost_e$, describes the fees the party conducting move $e$ has to pay to the trusted third party for conducting move $e$.

In order to enable the trusted third party to punish an unfaithfully behaving party and to compensate a faithfully behaving party, an exchange protocol can require to make a \emph{deposit}, which is managed by the trusted third party.
The total amount of deposit is tracked per party.
A party can change its total deposit in a move $e$ by amount $\deposit_e$ ($\deposit_e > 0$ for depositing, $\deposit_e < 0$ for retracting and $\deposit_e = 0$ for not changing the total amount of the party conducting move $e$).

If $B$ behaves unfaithfully, an exchange protocol can be designed to compensate $A$.
$\compensation_e^A$	denotes the compensation paid to $A$ by the trusted third party in move $e$.

Usually, a trusted third party does not use its own money to pay out compensations.
Instead, the compensation paid out (e.g., to a faithful party) is taken from deposits made before (e.g., from the unfaithful party).
Additionally, for our work we assume the environment, in which the exchange protocol is running, to be a financially closed system.
Therefore, the amount of total compensation paid out can never exceed the total amount of deposits not retracted at the end of the exchange, considering the conducted moves of all players $P_i \in \mathcal{P}$, where $\mathcal{P} = \{A, B\}$:
\begin{equation}
	\label{equation:financially closed environment}
	\sum_{P_i \in \mathcal{P}} \biggl( \sum_{e \in E_{P_i}} \bigl( \deposit_e - \sum_{P_j \in \mathcal{P}} \compensation_e^{P_j} \bigr) \biggr) \geq 0
\end{equation}
Note that a move $e$ conducted by $A$ can cause compensations payouts to $A$ as well as to $B$.

In an exchange of a good for a monetary payment both, the good and the monetary payment, are modeled as items $\iota_\mathit{good}$ and $\iota_\mathit{money}$ of the exchange protocol.
Both goods and money can temporarily be owned by the trusted third party acting as escrow, but only if the good or the money becomes available for the requesting party this is reflected by an item share $\rho > 0$.
E.g., in an exchange using a blockchain-based trusted third party, sending money to the trusted third party does not make it available to one of the parties (therefore $\rho = 0$) while sending unencrypted data to the trusted third party will make it available to everyone (because of the public readability of a blockchain), including the requesting party, therefore $\rho > 0$.

\begin{example}[Move Attributes]
We assume an extensive game with two players $A$ and $B$.
We assume that $A$ is the party conducting the move $e$.
We give three different examples:
\begin{itemize}
	\item $a(e) = ((0, 1), 50, 0, (0, 0))$ -- The move of $A$ makes the item fully available to $B$, charged by the trusted third party with transaction cost $\cost_e = 50$.
	\item $a(e) = ((0.5, 0), 0, 100, (0, 0))$ -- The move of $A$ reveals half of $B$'s item to $A$. $A$ deposits an amount of 100 to the trusted third party that could be used as payment for $B$ in later moves.
	\item $a(e) = ((0, 0), 0, -100, (150, 0))$ -- $A$ withdraws 100 from the funds $A$ deposited with the trusted third party. This is only possible if more than 100 have been deposited by $A$ before and were not used for paying or compensating $B$. Additionally, $A$ retrieves 150 as payment or compensation from the funds deposited by $B$.
\end{itemize}
\end{example}

Even if $A$ and $B$ have agreed on using an exchange protocol $\mathcal{X}$ for their exchange, they usually cannot technically be coerced to conduct a specific move $e \in E$ of $\mathcal{X}$.
Therefore, an exchange protocol $\mathcal{X}$ needs to differentiate between \emph{possible} and \emph{allowed} moves.
In our model, a game tree $T = (V, E, \mathcal{P}, o, \vv{p})$ contains all \emph{possible} moves $e \in E$ for players $P \in \mathcal{P}$.
We label moves \emph{allowed} by an exchange protocol $\mathcal{X}$ to be faithful and all other moves to be unfaithful using the following function:

\begin{definition}[Faithfulness]
	\label{def:faithfulness}
	Let $e = (v, v') \in E$ be an edge in a game tree $T$, $v \in V$ be the parent and $v' \in V$ one of its child nodes.
	We define a total function $\mathit{faithful?}: E \rightarrow \{\mathit{faithful}, \mathit{unfaithful}\}$ that returns for each move $e$ if $e$ is considered to be faithful or unfaithful behavior of player $A = o(v)$.
\end{definition}

Using the definitions presented before, we can now formally define an \emph{exchange protocol} to be a tuple of an extensive game $\Gamma$, a function $a(e)$ that returns move attributes for each move of the game tree of $\Gamma$ and a function $\mathit{faithful?}(e)$ that labels moves to be faithful or unfaithful behavior according to the exchange protocol:

\begin{definition}[Exchange Protocol]
	\label{def:exchange protocol}
	We define an \emph{exchange protocol} $\mathcal{X} = (\Gamma, a, \mathit{faithful?})$ as an extensive game $\Gamma$ together with a function $a(e)$ for retrieving move attributes and a function for determining the faithfulness of a move $\mathit{faithful?}(e)$, $e \in E$ of the game tree of $\Gamma$.
\end{definition}

An exchange protocol $\mathcal{X}$ is called \emph{fair exchange protocol} iff it achieves fairness according to Asokan, who request that in order to achieve fairness, either both parties have to get what they wanted or nobody got anything valuable at the end of the exchange \cite{DBLP:books/daglib/0094910}.

For an exchange protocol $\mathcal{X} = (\Gamma, a, \mathit{faithful?})$, using $\mathit{faithful?}(e)$, $e \in E$ we can classify all available strategies in $\Gamma$ regarding their faithfulness:

\begin{definition}[Faithful and Unfaithful Strategies and Strategy Sets]
	Let $\mathcal{X} = (\Gamma, a, \mathit{faithful?})$ be an exchange protocol.
	We define a strategy $S_A^*$ to be a \emph{faithful strategy} of $A$, if for all possible moves $e = (v, v')$ defined by its choice function $v' = c_A(v)$ it holds that $\mathit{faithful?}(e) = \mathit{faithful}$.
	We define a strategy $S_A^\diamond$ to be an \emph{unfaithful strategy} of $A$, if it is not a faithful strategy of $A$.
	We define the \emph{faithful strategy set} $\Sigma_A^*$ of $A$ as the set of all faithful strategies of $A$.
	We define the \emph{unfaithful strategy set} $\Sigma_A^\diamond$ of $A$ with $\Sigma_A^\diamond = \Sigma_A \setminus \Sigma_A^*$ as the set of all unfaithful strategies of $A$.
\end{definition}

As introduced in Definition \ref{def:game tree}, the quality of a chosen strategy is expressed using its \emph{payoff}.
In an exchange between $A$ and $B$, the payoff for $A$ is everything $A$ received (such as the received shares of $\iota_B$ and received compensations) minus everything $A$ had to give away (such as shares of $\iota_A$, the cost for conducting the exchange, and compensations paid to $B$).
In order to consider the values of the shares of $\iota_A$ and $\iota_B$ for the payoff, we need to introduce a value function that returns the values of the shares of $\iota_A$ and $\iota_B$ in the same unit as the cost or compensation.
However, $A$ and $B$ may have different valuations of the same item $\iota$ and shares of it, therefore $A$ and $B$ each have their own \emph{value function}:

\begin{definition}[Value Function, Valuation]
	\label{def:value function}
	Given a party $A$ and a share $\rho$ of an item $\iota$, the \emph{value function} $v_A(\iota, \rho)$ returns the \emph{valuation} of $A$ regarding the possession of a share of $\rho$ of $\iota$, $0 \leq \rho \leq 1$.
\end{definition}

In a game, the payoff for a player $A$ depends on the strategies chosen by all players of the game:

\begin{definition}[Payoff Function]
	\label{def:payoff function}
	Let $\mathcal{X} = (\Gamma, a, \mathit{faithful?})$ be an exchange protocol with players $A$ and $B$ and let $S_A$ and $S_B$ be their selected strategies.
	Let $c_A(v)$ be the choice function defined by $S_A$ and $c_B(v)$ be the choice function defined by $S_B$.
	Let $E_A$ and $E_B$ be the conducted moves of $A$ and $B$ and $v_t$ be the terminal node after the moves have been conducted.
	Let $a(e) = (\vv{\rho_e}, \cost_e, \deposit_e, \vv{\compensation_e})$ be the move attributes of an edge $e$.
	We define the payoff function $\vv{p}(S_A, S_B)$ such that it labels a terminal vertex $v_t$ in $\mathcal{X}$ with the payoffs $p_A, p_B$ for $A$ and $B$ as follows:
	\begin{align*}
		&(p_A, p_B) = \vv{p}(S_A, S_B) = \vv{p}(v_t) =
		\\ & \biggl( v_A(\iota_B, \sum\limits_{e \in E_A \cup E_B} \rho_e^A) - v_A(\iota_A, \sum\limits_{e \in E_A \cup E_B} \rho_e^B) \\ & + \sum\limits_{e \in E_A} (\compensation_e^A - \deposit_e - \cost_e) + \sum\limits_{e \in E_B} \compensation_e^A~,\\
		& v_B(\iota_A, \sum\limits_{e \in E_A \cup E_B} \rho_e^B) - v_B(\iota_B, \sum\limits_{e \in E_A \cup E_B} \rho_e^A) \\ & + \sum\limits_{e \in E_B} (\compensation_e^B - \deposit_e - \cost_e) + \sum\limits_{e \in E_A} \compensation_e^B \biggl)
	\end{align*}
\end{definition}

Given two strategies $S_A$ and $S_B$, the payoff function $\vv{p}(S_A, S_B) = (p_A, p_B)$ returns the payoff $p_A$ for $A$ for participating in the exchange as well as the payoff $p_B$ for $B$.
The payoff for each player (w.l.o.g. using $A$ as example for now) is calculated by summing up the difference of the value $v_A(\iota_B, \rho_e^B)$ of the item shares received minus the value $v_A(\iota_A, \rho_e^A)$ of the item shares given away (see Definition \ref{def:value function}), plus compensations $\sum_{e \in E_A} \compensation_e^A$ received as a result of moves conducted by $A$, minus deposits $\sum_{e \in E_A} \deposit_e$ made or retracted by $A$ minus the cost $\sum_{e \in E_A} \cost_e$ $A$ has to pay for, plus compensations $\sum_{e \in E_B} \compensation_e^A$ received by $A$ as a result of moves conducted by $B$.

The payoff can be interpreted as financial benefit (or loss) a player experiences participating in an exchange.


If the technical environment cannot force the parties to conduct a next move, a party may leave an exchange at any time.
In this case, it is also not possible to forcefully withdraw money from the leaving party and send it to the faithful party as compensation.
For example, in a blockchain environment, no party can be forced to create new transactions, and withdrawing money from its wallet inevitably requires collaboration.
Since leaving the protocol is not indicated by an explicit action of a party, it has to be assumed by the exchange protocol after a previously defined timeout.
We model the possibility of such an unfaithful leave of an exchange protocol $\mathcal{X}$ with an edge $e_\mathit{leave}$ in its game tree $T$:

\begin{definition}[Unfaithful Leave]
	\label{def:unfaithful leave}
	Let $e_\mathit{leave} \in E$ represent an unfaithful leave, then
	\begin{itemize}
		\item $a(e_\mathit{leave}) = (\vv{0}, 0, 0, \vv{0})$
		\item $faithful?(e_\mathit{leave}) = \mathit{unfaithful}$
	\end{itemize}
\end{definition}

\begin{definition}[Unfaithful Leave At Any Time]
	An exchange protocol $\mathcal{X}$ allows $A$ to \emph{unfaithfully leave at any time}, if for each strategy $S_A \in \Sigma_A$ with $E_A = (e_1, ..., e_n)$
	all strategies $S_A^i$ with $E_A^i = (e_1, ..., e_i, e_\mathit{leave}), 1 \leq i \leq n$ it holds:
	$S_A^i \in \Sigma_A^\diamond$ and also $E_A^0 = (e_\mathit{leave}) \in \Sigma_A^\diamond$.
\end{definition}

\begin{example}[Environment without Unfaithful Leave]
	Assuming a situation in which a shoplifter $B$ can decide to buy or to steal, but if he steals he will definitely be caught by the police.
	When getting caught, he can decide to confess or not to confess, but he cannot leave the police station until he decides either to confess or not to confess.
	This results in a faithful strategy $S_B^f$ with $E_B = (e_\mathit{pay})$ and unfaithful strategies $S_B^\mathit{u1}$ with $E_B = (e_\mathit{steal}, e_\mathit{confess})$ and $S_B^\mathit{u2}$ with $E_B = (e_\mathit{steal}, e_\mathit{not confess})$.
	A strategy $S_B^\mathit{u3}$ with $E_B = (e_\mathit{steal}, e_\mathit{leave})$, in which $B$ leaves the protocol after stealing without the decision of confession is not allowed by the environment and therefore $S_B^\mathit{u3} \notin \Sigma_B$.
\end{example}

Depending on the environment in which an exchange protocol is used, suffering transaction cost might be inevitable.
If transaction cost is non-negligible, we call the exchange protocol to be in an \emph{environment with non-negligible transaction cost}:

\begin{definition}[Environment with non-negligible transaction cost]
	\label{def:environment with non-negligible transaction cost}
	Given an exchange protocol $\mathcal{X}$ represented by game tree $T = (V, E, \mathcal{P}, o, \vv{p})$.
	We define $\mathcal{X}$ to be in an \emph{environment with non-negligible transaction cost} if for all $e \in E \setminus e_\mathit{leave}$ with $a(e) = (\vv{\rho_e}, \cost_e, \deposit_e, \vv{\compensation_e})$: $\cost_e > 0$.
\end{definition}

\section{Cost Fairness}
\label{sec:cost fairness}

Cost fairness has already been informally defined by Lohr et al.\ \cite{lohr2020cost}.
Using the model for exchange protocols described in Section \ref{sec:modeling}, we present a formal definition of two notions of cost fairness.
\emph{Partial cost fairness} provides a guarantee of cost fairness to one of the two parties involved in the exchange while \emph{full cost fairness} provides the guarantee to both parties.

If an exchange protocol $\mathcal{X}$ achieves partial cost fairness in favor of $A$, it will provide the guarantee that regardless whether an actual exchange of items took place the possible benefit (or loss) induced by the exchanged items minus potential cost plus potential compensations received will not lead to a loss for $A$ in total.

\begin{definition}[Partial Cost Fairness]
\label{def:partial cost fairness}
A two-party exchange protocol $\mathcal{X}$ with players $A$ and $B$ achieves \emph{Partial Cost Fairness} in favor of $A$ iff for any strategy $S_B \in \Sigma_B$ for $B$ there exists at least one strategy $S_A \in \Sigma_A^*$ for $A$ such that for $\vv{p}(S_A, S_B) = (p_A, p_B)$ it holds $p_A \geq 0$.
\end{definition}

Applying the definition of partial cost fairness in favor of both parties, $A$ and $B$, an exchange protocol achieves full cost fairness:

\begin{definition}[Full Cost Fairness]
\label{def:full cost fairness}
A two party exchange protocol $\mathcal{X}$ with players $A$ and $B$ achieves \emph{Full Cost Fairness} iff
\begin{itemize}
    \item $\mathcal{X}$ achieves Partial Cost Fairness in favor of $A$ and
    \item $\mathcal{X}$ achieves Partial Cost Fairness in favor of $B$.
\end{itemize}
\end{definition}

Using Definition \ref{def:partial cost fairness} and Definition \ref{def:full cost fairness}, two-party exchange protocols modeled as described in Section \ref{sec:modeling} can be assessed regarding cost fairness as it has been asked for in RQ \ref{rq:cost fairness assessment}.

\section{Achievability of Cost Fairness}
\label{sec:cost fairness achievability}

If w.l.o.g., $B$ cannot leave the exchange protocol without the approval of the trusted third party due to environmental constraints, an exchange protocol could be designed in such a way that $B$ can only leave the exchange protocol if $B$ compensated $A$ for the transaction cost in case that $A$ was behaving faithfully while $B$ was behaving unfaithfully.
This way, an exchange protocol can be designed to always guarantee cost fairness.

\begin{theorem}
	\label{theorem:partial cost fairness impossibility}
	
	Given a two-party exchange protocol $\mathcal{X}$ with parties $A$ and $B$ in an environment with non-negligible transaction cost.
	If $A$ initializes the exchange protocol and $B$ can unfaithfully leave at any time, it is not possible to achieve partial cost fairness in favor or $A$.
\end{theorem}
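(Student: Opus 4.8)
The plan is to establish the impossibility by exhibiting the grieving attack of Figure~\ref{fig:grieving attack sequence diagram} as a concrete counter-strategy: a single strategy $S_B\in\Sigma_B$ against which \emph{every} faithful strategy of $A$ yields a strictly negative payoff for $A$, which by Definition~\ref{def:partial cost fairness} rules out partial cost fairness in favor of $A$.

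First I would fix an arbitrary faithful strategy $S_A\in\Sigma_A^*$ and use the hypothesis that $A$ initializes the exchange: the root $v_0$ of the game tree is owned by $A$, so the choice function of $S_A$ selects a first move $e_1=(v_0,v_1)$. Since $S_A$ is faithful and $e_\mathit{leave}$ is unfaithful (Definition~\ref{def:unfaithful leave}), $e_1\neq e_\mathit{leave}$, and the environment assumption (Definition~\ref{def:environment with non-negligible transaction cost}) then forces $\cost_{e_1}>0$. Next I would let $S_B$ be a strategy in which $B$ plays $e_\mathit{leave}$ at the first node it owns; such a strategy lies in $\Sigma_B$ precisely because $\mathcal{X}$ allows $B$ to unfaithfully leave at any time. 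Playing $(S_A,S_B)$ reaches a terminal vertex $v_t$ with $E_B\subseteq\{e_\mathit{leave}\}$ and with $E_A$ a non-empty set of faithful --- hence non-$e_\mathit{leave}$, hence positive-cost --- moves of $A$ that contains $e_1$.

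Then I would expand $\vv{p}(S_A,S_B)=(p_A,p_B)$ via Definition~\ref{def:payoff function}. Because $a(e_\mathit{leave})=(\vv{0},0,0,\vv{0})$, the move $e_\mathit{leave}$ contributes nothing, so $p_A = v_A(\iota_B,\rho^A) - v_A(\iota_A,\rho^B) + \sum_{e\in E_A}(\compensation_e^A-\deposit_e-\cost_e)$ with $\rho^A=\sum_{e\in E_A}\rho_e^A$ and $\rho^B=\sum_{e\in E_A}\rho_e^B$. I would then bound the terms: (i) since $B$ transfers no share of its own item $\iota_B$ --- its only possible move is $e_\mathit{leave}$, and a positive item share to $A$ can only reflect $\iota_B$ becoming available to $A$, which presupposes $B$ having released it --- we get $\rho^A=0$ and $v_A(\iota_B,0)=0$; (ii) $v_A(\iota_A,\rho^B)\geq 0$; (iii) instantiating the financially-closed-environment inequality~\eqref{equation:financially closed environment} with the $E_B$-term vanishing gives $\sum_{e\in E_A}\compensation_e^A \leq \sum_{e\in E_A}\deposit_e-\sum_{e\in E_A}\compensation_e^B \leq \sum_{e\in E_A}\deposit_e$, i.e.\ $\sum_{e\in E_A}(\compensation_e^A-\deposit_e)\leq 0$; and (iv) $\sum_{e\in E_A}\cost_e\geq\cost_{e_1}>0$. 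Combining, $p_A\leq -\cost_{e_1}<0$. Since $S_A$ was an arbitrary element of $\Sigma_A^*$, no faithful strategy of $A$ attains $p_A\geq 0$ against this $S_B$, which contradicts Definition~\ref{def:partial cost fairness}.

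I expect the main obstacle to be making step~(i) rigorous: the move-attribute formalism (Definition~\ref{def:move attributes}) does not on its face prevent a move of $A$ from carrying $\rho_e^A>0$, so I must pin down $\rho^A=0$ from the intended semantics in which $\iota_B$ is $B$'s item and an item share to $A$ records $\iota_B$ being made available to $A$. I would handle this by stating it as a conservation/well-formedness assumption on exchange protocols, or, following the discussion preceding the theorem, by restricting to protocols that are fair exchange protocols in Asokan's sense, so that $B$ receiving nothing forces $A$ to receive nothing of value. A minor secondary point is the bookkeeping observation that after $e_\mathit{leave}$ the remaining play consists of only finitely many (faithful, positive-cost) moves of $A$, so that $v_t$ and the sums above are well defined.
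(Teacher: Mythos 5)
Your proposal follows essentially the same argument as the paper's proof: $B$ plays $e_\mathit{leave}$ immediately, the financially-closed-environment inequality~\eqref{equation:financially closed environment} bounds $\sum_{e\in E_A}(\compensation_e^A-\deposit_e)$ by zero, and the strictly positive cost of $A$'s initialization move forces $p_A<0$. Your step~(i) is in fact slightly more careful than the paper, which simply asserts that the item-value terms vanish (invoking the fair-exchange property for the given-away share) without addressing your observation that a move of $A$ could formally carry $\rho_e^A>0$.
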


\begin{proof}[Proof by Contradiction]
	We assume that $A$ initializes $\mathcal{X}$ and $B$ can unfaithfully leave at any time.
	We assume that partial cost fairness in favor of $A$ can be achieved, therefore, according to Definition \ref{def:partial cost fairness}, for any strategy $S_B$ chosen by $B$, there must exist a strategy $S_A$ for $A$ with $\vv{p}(S_A, S_B) = (p_A, p_B)$ where the payoff of $A$ $p_A \geq 0$.
	Since $B$ can leave $\mathcal{X}$ unfaithfully at any time, $B$ can choose a strategy $S_B'$ such that $E_B = (e_\mathit{leave})$.
	According to Definition \ref{def:partial cost fairness} and Definition \ref{def:payoff function}, there has to be a strategy $S_A'$ for $A$ such that
	\begin{align*}
		p_A & = v_A(\iota_B, \sum\limits_{e \in E_A \cup E_B} \rho_e^B) - v_A(\iota_A, \sum\limits_{e \in E_A \cup E_B} \rho_e^A) \\ & + \sum\limits_{e \in E_A} (\compensation_e^A - \deposit_e - \cost_e) + \sum\limits_{e \in E_B} \compensation_e^A \geq 0
	\end{align*}
	Since the only move of $B$ is $e_\mathit{leave}$, $B$ did not share anything to $A$, therefore $v_A(\iota_B, \sum_{e \in E_A \cup E_B} \rho_e^B) = 0$.
	Since $\mathcal{X}$ is a fair exchange protocol, also $A$ did not share anything to $B$, so $v_A(\iota_A, \sum_{e \in E_A \cup E_B} \rho_e^A) = 0$.
	Furthermore, $S_B'$ does not contain any moves causing compensation payouts to $A$, therefore
	$\sum_{e \in E_B} \compensation_e^A = 0$.
	It remains to show that
	\begin{equation}
		\label{equation:partial cost fairness impossibility}
		\sum\limits_{e \in E_A} (\compensation_e^A - \deposit_e - \cost_e) \geq 0   
	\end{equation}
	
	Since $\mathcal{X}$ is assumed to be in an environment with non-negligible transaction cost and $A$ initialized $\mathcal{X}$ with a move $e \neq e_\mathit{leave}$, we know that $\sum_{e \in E_A} \cost_e > 0$.
	Since the only move in $S_B'$ is $e_\mathit{leave}$ with $a(e) = (\vv{0}, 0, 0, \vv{0})$, therefore, in Equation \ref{equation:financially closed environment},
	$\sum_{e \in E_B} \bigl( \deposit_e - \sum_{P \in \mathcal{P}} \compensation_e^{P} \bigr) = 0$.
	Therefore, according to Equation \ref{equation:financially closed environment}, it has to hold that $\sum_{e \in E_A} \bigl( \deposit_e - \sum_{P \in \mathcal{P}} \compensation_e^{P} \bigr) \geq 0$.
	Hence Equation \ref{equation:partial cost fairness impossibility} can never be satisfied. 
	Therefore, for a strategy $S_B'$ with $E_B = (e_\mathit{leave})$ there does not exist such a strategy $S_A'$ such that for $\vv{p}(S_A, S_B) = (p_A, p_B)$ it holds that $p_A \geq 0$, which is a contradiction to the assumption that partial cost fairness can be achieved.
\end{proof}

\begin{theorem}
	\label{theorem:full cost fairness impossibility}
	Given a two-party fair exchange protocol $\mathcal{X}$ with parties $A$ and $B$ using an environment with non-negligible transaction cost.
	If $A$ and $B$ can unfaithfully leave the protocol at any time and moves of $A$ and $B$ are always executed sequentially, it is impossible to achieve full cost fairness.
\end{theorem}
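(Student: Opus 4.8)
The plan is to reduce the statement to Theorem~\ref{theorem:partial cost fairness impossibility} by observing that, in a sequential game, exactly one of the two parties is forced to make the first move and therefore unavoidably plays the role of the initializer. First I would recall that, by Definition~\ref{def:full cost fairness}, full cost fairness requires partial cost fairness in favour of \emph{both} $A$ and $B$; hence it suffices to exhibit one of the two parties for which partial cost fairness fails.

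Next I would analyse the root $v_0$ of the game tree $T$ of $\mathcal{X}$. Since the moves of $A$ and $B$ are executed sequentially, $v_0$ is a non-terminal vertex owned by exactly one player; without loss of generality (relabelling $A$ and $B$ if necessary, which is legitimate because the hypotheses are symmetric in $A$ and $B$) assume $o(v_0) = A$. Two cases arise. If $A$ has no faithful strategy at all, then $\Sigma_A^* = \emptyset$ and partial cost fairness in favour of $A$ fails vacuously. Otherwise, take any faithful strategy $S_A \in \Sigma_A^*$; its choice function is defined at $v_0$, so the play following $S_A$ begins with a move $e_1 = (v_0, c_A(v_0)) \in E_A$. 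Because $S_A$ is faithful, $\mathit{faithful?}(e_1) = \mathit{faithful}$, so $e_1 \neq e_\mathit{leave}$ (recall $\mathit{faithful?}(e_\mathit{leave}) = \mathit{unfaithful}$ by Definition~\ref{def:unfaithful leave}), and since $\mathcal{X}$ lives in an environment with non-negligible transaction cost, $\cost_{e_1} > 0$. In other words, every faithful play of $A$ forces $A$ to initialize $\mathcal{X}$.

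Then I would observe that the hypotheses of Theorem~\ref{theorem:partial cost fairness impossibility} are now all in place: $A$ initializes $\mathcal{X}$, the protocol is in an environment with non-negligible transaction cost, and $B$ can unfaithfully leave at any time. Therefore partial cost fairness in favour of $A$ cannot be achieved, and consequently neither can full cost fairness, which would require it. The accounting behind the contradiction --- vanishing item valuations for an empty counterparty play, zero compensations, and the financially closed environment constraint of Equation~\ref{equation:financially closed environment} applied to the payoff of Definition~\ref{def:payoff function} --- is exactly the argument already carried out in the proof of Theorem~\ref{theorem:partial cost fairness impossibility}, so I would not repeat it.

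I expect the only delicate point to be the reduction step itself: one must make precise that ``$A$ initializes the exchange protocol'' in Theorem~\ref{theorem:partial cost fairness impossibility} is equivalent, in a sequential game, to ``$A$ owns $v_0$ and every faithful strategy of $A$ opens with a non-leave (hence strictly costly) move,'' and one must dispatch cleanly the degenerate sub-case in which the root owner possesses no faithful strategy at all. Once those are handled, the theorem follows immediately.
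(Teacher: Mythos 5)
Your proposal is correct and follows essentially the same route as the paper: the paper's proof is a three-line reduction observing that sequential execution forces one party (w.l.o.g.\ $A$) to initialize, so Theorem~\ref{theorem:partial cost fairness impossibility} rules out partial cost fairness in favour of $A$ and hence full cost fairness. Your version merely makes explicit what the paper leaves implicit --- that the owner of the root $v_0$ is the initializer, that any faithful opening move is not $e_\mathit{leave}$ and hence costly, and the degenerate case $\Sigma_A^* = \emptyset$ --- which is a welcome tightening but not a different argument.
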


\begin{proof}
	Since moves of $A$ and $B$ are always executed sequentially, either $A$ or $B$ has to initialize $\mathcal{X}$.
	If, w.l.o.g., $A$ initializes the protocol and $B$ can leave unfaithfully, according to Theorem \ref{theorem:partial cost fairness impossibility} partial cost fairness in favor of $A$ cannot be achieved.
	Hence, full cost fairness cannot be achieved.
\end{proof}

\section{Discussion}
\label{sec:discussion}

The main difference of our game-theoretic model of exchange protocols compared with existing models is the consideration of transaction cost and values within the payoff calculation.
We argue why consideration of transaction cost and cost fairness is important for the usability and acceptance of exchange protocols, using blockchain-based exchange protocols as an example.
We also highlight differences regarding transaction cost and cost fairness between public and private blockchains.

\subsection{Game-Theoretic Model of Exchange Protocols}
\label{sec:discussion/model}

In order to answer RQ \ref{rq:modeling}, we developed a model for two-party exchange protocols considering transaction cost.
In contrast to the formal model presented by Buttyán and Hubaux \cite{buttyan2000toward}, in our model of two-party exchange protocols presented in Section \ref{sec:modeling}, we do not consider the actual item but the individual valuations of the items of the parties involved in the exchange for the following reasons:
Game theory generally assumes rational players, which try to maximize their own payoff. If $p_A + p_B < 0$, at least one player will not have any benefit from the exchange, so they rather would not participate in the exchange at all \cite{tao2011formal}.
Looking at individual values $v_A(\iota, \rho^A)$ and $v_B(\iota, \rho^B)$ for an item $\iota$ or a share of it, it is possible that both parties may benefit from an exchange at the same time, if the received item has a higher value for the receiving party than the item that has been passed instead (see \cite{DBLP:books/lib/ShapiroV99}: ``You must price your information goods according to consumer value, not according to your production cost.'').
Furthermore, our model also covers additional financial aspects of an exchange, such as cost (decreasing the benefit) or compensations paid to a party (increasing the benefit).

\subsection{Cost Fairness}
\label{sec:discussion/cost fairness general}

Due to the necessity of the existence of a trusted third party in order to achieve fairness in an exchange \cite{even1980relations,pagnia1999impossibility}, potential transaction cost charged by a trusted third party cannot be avoided when fairness according to Asokan \cite{DBLP:books/daglib/0094910} is required.
For this reason, in Section \ref{sec:cost fairness}, we defined cost fairness, which takes into account transaction cost, but also potential differences in the value of the items to be exchanged and possible compensation payments.
With our definitions of partial cost fairness (Definition \ref{def:partial cost fairness}) and full cost fairness (Definition \ref{def:full cost fairness}) we provide a concept that is applicable for two party-exchange protocols.
With the definitions of cost fairness, we provide a possibility to assess fairness of two-party exchange protocols regarding transaction cost, as asked for in RQ \ref{rq:cost fairness assessment}.

Intentionally, we did not define cost fairness as an extension of fairness, since the concept cannot only be applied for fair exchange protocols but also general exchange protocols (e-commerce as well as in analog world).
As there are protocols, which do not (yet) aim for cost fairness while achieving fairness, it might also be desirable to have a protocol that achieves cost fairness but does not need to achieve fairness.
We suggest that an exchange protocol should try to achieve both, fairness according to \cite{DBLP:conf/ccs/AsokanSW97} and (full) cost fairness.

As long as all parties can be forced to follow the exchange protocol they agreed on and cannot leave it unfaithfully before completing one of the strategies allowed by the protocol, cost fairness can be established by enforcing a compensation payment to the faithful party at the end of the protocol if one party behaves unfaithfully.
If a party can unfaithfully leave the exchange, such a compensation payment directly originating from the unfaithful party cannot be enforced.
To reduce the amount of unilateral cost in such a case, a compensation mechanism can be used, where all parties deposit some money at the beginning of the exchange protocol, which then can be used by the trusted third party to take the amount required to compensate the faithful party from the deposit of the unfaithful party.
However, also the depositing step might raise cost for the faithful party and therefore has to be included in the cost fairness analysis.

\subsection{Application of Cost Fairness to Blockchain-based Fair Exchange Protocols}
\label{sec:discussion/blockchain application}

As the motivation of this work is based on blockchain-based fair exchange protocols, we also want to apply cost fairness to blockchain-based fair exchange protocols.
Since there are fundamental differences between public and private blockchains regarding transaction cost, the assessment of cost fairness has to be done differently for public and private blockchains.

\subsubsection{Cost Fairness in Public Blockchain-based Fair Exchange Protocols}
\label{sec:discussion/blockchain application/public}

In context of public blockchains, transaction cost is accrued in form of fees, which have to be paid per blockchain transaction to incentivize so-called miners in operating and supporting the blockchain infrastructure \cite{wood2014ethereum}.
Therefore, for blockchain-based exchange protocols executed on a public blockchain, having transaction cost is inevitable.
Furthermore, due to the pseudo-anonymity \cite{DBLP:conf/eurosp/BiryukovT19} and the distributed nature of a blockchain, parties involved in the exchange can leave the exchange protocol at any time (by stopping to interact, usually assumed after a timeout defined before the protocol starts).
Therefore, since both parties of a two-party exchange can leave unfaithfully at any time, according to Theorem \ref{theorem:partial cost fairness impossibility} it is not possible to achieve partial cost fairness in favor of the party that has to initialize the exchange protocol.
At least it is possible to achieve partial cost fairness in favor of the second party, if the initializing party is requested to deposit funds during the initialization move.
This compensation can be used by the trusted third party to compensate the other party in case the initializing party behaves unfaithfully.
One example for a blockchain-based two-party fair exchange protocol implementing a compensation mechanism is SmartJudge \cite{wagner2019dispute}.

For an exemplary assessment of a public blockchain-based two-party fair exchange protocol, we take a detailed look at FairSwap, which is designed for the Ethereum blockchain.
The seller initializes the exchange protocol by deploying the smart contract to the blockchain, which is charged with transaction fees of approx.\ 1,050,000 Gas\footnote{approx.\ worth about \FairSwapDeploymentCost~USD as of \FairSwapDeploymentCostDate, see Section \ref{sec:introduction}}.
Since the Ethereum blockchain cannot protect against unfaithful leave, the buyer can leave the protocol right after the seller deployed the contract.
Therefore, partial cost fairness in favor of the seller is not achieved, since the payoff is $(p_\mathit{seller}, p_\mathit{buyer}) = (-1050000 \mathit{Gas}, 0)$.

As shown in Theorem \ref{theorem:partial cost fairness impossibility}, it is not possible to fix FairSwap to achieve full cost fairness.
However, it is possible to reduce cost of the initialization step by creating a \emph{container protocol}, which contains requests a deposit in its initialization move and monitors the behavior of the parties of the contained exchange protocol (e.g., FairSwap) and pays out compensations to the honest party if one party starts to cheat.
Alternatively, state channels \cite{DBLP:conf/ccs/DziembowskiFH18} can be used to execute the protocol off-chain and therefore reduces the amount of blockchain transactions and therefore transaction cost to be paid.

In order to allow for public blockchain-based exchange protocols to achieve cost fairness, a change of the blockchain environment is required in which initializing deposits, such as for initializing a container protocol or opening a state channel, is not charged with transaction cost.

Concluded, answering RQ \ref{rq:cost fairness/public blockchains}, it is not possible to achieve full cost fairness on public blockchains with transaction cost, since (at the current state of art) cost is inevitable and one party has to initialize the protocol, and therefore, according to Theorem \ref{theorem:partial cost fairness impossibility}, partial cost fairness can never be achieved simultaneously in favor of $A$ and $B$.

\subsubsection{Cost Fairness in Private Blockchain-based Fair Exchange Protocols}
\label{sec:discussion/blockchain application/private}

In contrast to a public blockchain, a private blockchain only allows access for well-identified participants.
Therefore, the risk of, e.g., a grieving attack is considerably lower since a party behaving unfaithfully can be punished by getting ignored on future attempts or the access to the private blockchain can be revoked.
Furthermore, a private blockchain does not necessarily come with any means of transaction cost (e.g., Hyperledger Fabric \cite{DBLP:conf/eurosys/AndroulakiBBCCC18}), therefore concepts of cost or money are not an inherent part of a private blockchain.
In this case, means of financial (or comparable) compensations for provided services or items exchanged is in the responsibility of the implementation of the respective smart contract, implementing the exchange.
If means of transaction cost is introduced by such a smart contract, cost fairness can also be assessed like it is done for public blockchains.

However, since the actual operation of the private blockchain network is not necessarily covered by their smart contract applications, also these cost can be taken into account for cost fairness assessment (e.g., cost for servers, internet connection, etc.).
If however (as it is, e.g., with Quorum Blockchain\footnote{Quorum Blockchain --- \url{https://github.com/ConsenSys/quorum}, accessed 2021-12-17}) the private blockchain comes with a financial concept similar to the one of public blockchains, the blockchain network itself could be extended to provide a compensation service of last resort, which takes care about compensation payouts if neither the actual exchange protocol nor superior container protocols are able to provide cost fairness.

Therefore, we have to incorporate operational cost instead of considering transaction cost for the assessment of cost fairness in order to answer RQ \ref{rq:cost fairness/private blockchains}.
Since the choice of the blockchain concept and its implementation is up to the operator(s) of the private blockchain network, they are also free to implement any kind of compensation mechanism, which could be used as compensation of last resort, if inner protocols do not achieve cost fairness.

\section{Conclusion}
\label{sec:conclusion}

In this work, we have introduced our approach on how to model an exchange protocol using notions from game theory (answering RQ \ref{rq:modeling}).
This model can be used as a base for further works for formal analyses of different aspects of two-party exchange protocols.
We used this model to define partial cost fairness and full cost fairness as a desirable property of exchange protocols (answering RQ \ref{rq:cost fairness assessment}).
As major finding, we have shown that cost fairness cannot be achieved on current state-of-the-art blockchains such as Ethereum (answering RQ \ref{rq:cost fairness/public blockchains}).
In private blockchains, which can be designed by the operators, cost fairness can be enabled by allowing for free depositing transactions or even enforced by adding a compensation mechanism as part of the blockchain network (answering RQ \ref{rq:cost fairness/private blockchains}).

In future work, we want to use our model to compare existing blockchain-based two-party exchange protocols regarding different aspects, such as fairness, cost, cost fairness and game-theoretical strategy equilibria  \cite{morris2012introduction}.
Furthermore, we plan to apply state channels to reduce total transaction cost of blockchain-based fair exchange protocols and to allow for a reliable prediction of maximum cost to be covered for the honest party if full cost fairness cannot be achieved, which can be used as a metric for the risk to be taken when joining an exchange.
Related to this, we want to introduce another definition of cost fairness, which considers if the transaction cost of an exchange protocol are guaranteed to stay within the prediction.
We will name this definition \emph{cost fairness with $\epsilon$}, which states if the maximum cost that have to be covered by the honest party if the other party behaves unfaithfully are smaller than $\epsilon$.
The value of $\epsilon$ can then also be used to compare worst-case transaction cost between two exchange protocols.

One drawback of our model is that it is limited to two-party exchanges.
In order to allow a more general usage, we want to extend our model to allow for $n$-party exchanges.


\bibliographystyle{IEEEtran}
\bibliography{IEEEabrv,cost-fairness-extended}

\end{document}